\newtheorem{thm}{Theorem}
\newcommand{\Nset}{\ensuremath{\mathcal{N}}}
\title{Optimal Power Allocation for Two-Way Decode-and-Forward OFDM Relay Networks}
\author{\IEEEauthorblockN{Fei He$^*$, Yin Sun$^*$, Xiang Chen, Limin Xiao, Shidong Zhou}
\IEEEauthorblockA{State Key Laboratory on Microwave and Digital Communications\\
Tsinghua National Laboratory for Information Science and Technology\\
Department of Electronic Engineering, Tsinghua University, Beijing 100084, China\\
Email: hef08@mails.tsinghua.edu.cn, sunyin02@gmail.com}
\thanks{$^*$Fei He and Yin Sun contribute equally to this work. This work is supported by National Basic Research Program of China (2012CB316002), National S\&T Major Project (2010ZX03005-003), National NSF of China (60832008), China's 863 Project (2009AA011501), Tsinghua Research Funding (2010THZ02-3), PCSIRT, International Science Technology Cooperation Program (2010DFB10410) and Tsinghua-Qualcomm Joint Research Program.}}
\begin{document}
\maketitle

\begin{abstract}
This paper presents a novel two-way decode-and-forward (DF) relay strategy for Orthogonal Frequency Division Multiplexing (OFDM) relay networks. This DF relay strategy employs multi-subcarrier joint channel coding to leverage frequency selective fading, and thus can achieve a higher data rate than the conventional per-subcarrier DF relay strategies. We further propose a low-complexity, optimal power allocation strategy to maximize the data rate of the proposed relay strategy. Simulation results suggest that our strategy obtains a substantial gain over the per-subcarrier DF relay strategies, and also outperforms the amplify-and-forward (AF) relay strategy in a wide signal-to-noise-ratio (SNR) region.
\end{abstract}


\section{Introduction}
In recent years, relaying has emerged as a powerful technique to improve the coverage and throughput of wireless networks. Compared with the traditional one-way relaying, two-way relaying provides better spectral efficiency, where two terminal nodes employs an intermediate  relay node to exchange information simultaneously \cite{Rankov_ISIT06,Kim_TIT08}.

Orthogonal Frequency Division Multiplexing (OFDM) is an essential broadband transmission technique to improve the spectral efficiency of wireless networks. A combination of OFDM and relaying techniques has been advocated by many industry standardization groups of next generation wireless networks, such as IEEE 802.16m and 3GPP's LET-Advanced.

In one-way OFDM relay networks, multi-subcarrier joint decode-and-forward (DF) relaying was studied in \cite{Liang_TIT07,Hsu_TSP11,Sun_TSP11}, which can achieve higher data rate than per-subcarrier DF relaying. For two-way OFDM relay networks, the amplify-and-forward (AF) relay strategies were commonly adopted \cite{Ho_ICC08,Kang_VTC09F,Dong_ICASSP10}. However, their performance is quite poor in the low signal-to-noise-ratio (SNR) region due to the amplified noises. The \emph{per-subcarrier} DF relay strategies were considered in \cite{Jitvan_TVT09,Ho_ICC10,Li11}, which are essentially simple accumulations of narrow-band two-way DF relaying over the individual subcarriers. Unfortunately, these strategies suffer from rate losses due to channel mismatching.

In this paper, we propose a novel \emph{multi-subcarrier DF} relay strategy for two-way OFDM relay networks. By performing channel coding across subcarriers, this strategy can exploit frequency selective fading, and achieve higher data rate than the per-subcarrier DF relay strategy in \cite{Jitvan_TVT09}. We further formulate a power allocation problem to maximize the \emph{exchange rate}, which is defined as the maximal data rate can be simultaneously achieved in both directions. An efficient dual decomposition algorithm is proposed to resolve this problem, which has a linear complexity with respect to the number of subcarriers. Simulation results show that the proposed multi-subcarrier DF relay strategy outperforms not only the conventional per-subcarrier DF relay strategy, but also the AF relay strategy in a wide SNR region.

\section{System Description} \label{sec:system}
We consider the two-way OFDM relay network shown in Fig.~\ref{fig:two-way-OFDM}: two terminal nodes $T_1$ and $T_2$ exchange information via an intermediate relay node $T_R$. Assume that each node has a single antenna and operates in a half-duplex mode, i.e., transmitting and receiving in orthogonal time slots \cite{Rankov_ISIT06,Kim_TIT08}. All the nodes employ OFDM air interface with the same $N$ subcarriers.

\begin{figure}[h]
    \centering
    \subfigure[The MA phase.]{
    \scalebox{0.85}{\includegraphics*[68,676][313,772]{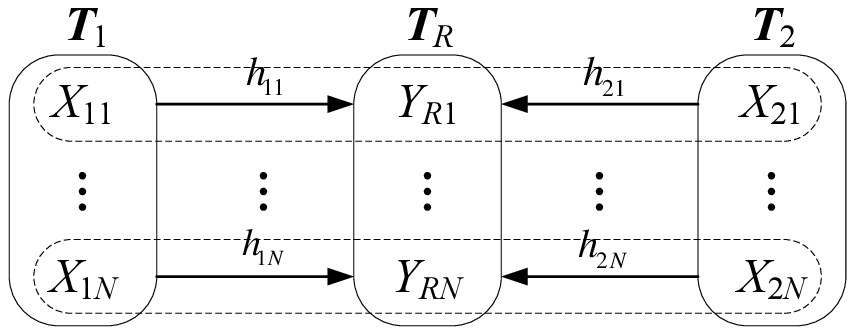}} \label{fig:MA-phase}}
    \subfigure[The BC phase.]{
    \scalebox{0.85}{\includegraphics*[68,557][313,651]{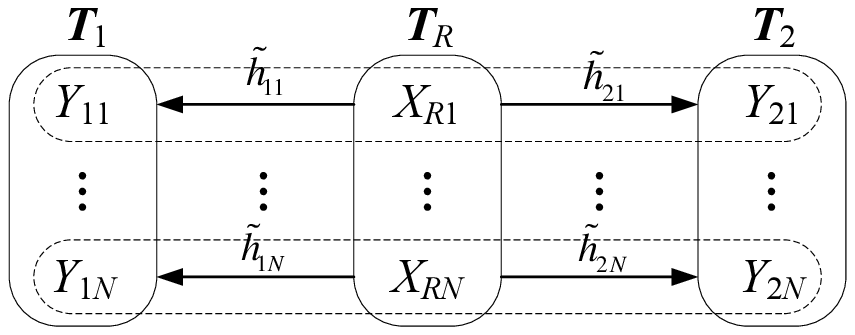}} \label{fig:BC-phase}}
    \caption{System model of two-way OFDM relay network.} \label{fig:two-way-OFDM}
\end{figure}

The DF relay procedure comprises of a multiple-access (MA) phase and a broadcast (BC) phase without direct transmissions, as shown in Fig.~\ref{fig:two-way-OFDM}. We set $X_i\!=\!(X_{i1},\ldots,X_{iN})$ and $Y_i\!=\!(Y_{i1},\ldots,Y_{iN})$ for $i\!=\!1,2,R$, where $X_{in}$ and $Y_{in}$ denote the normalized transmitted and received signal in the $n$th subcarrier at $T_i$. In the MA phase, $T_1$ and $T_2$ transmit $X_1$ and $X_2$ simultaneously to the relay node, and the relay $T_R$ performs multi-user detection to fully decode ${X}_1$ and ${X}_2$ from the received $Y_R$; in the BC phase, the relay $T_R$ broadcasts $X_R=f(\hat{X}_1,\hat{X}_2)$ to $T_1$ and $T_2$.

Specifically, in the $n$th subcarrier, $h_{1n}$ and $h_{2n}$ denote the channel coefficients from $T_1$ and $T_2$ to $T_R$, respectively, $\tilde{h}_{1n}$ and $\tilde{h}_{2n}$ denote the channel coefficients from $T_R$ to $T_1$ and $T_2$, respectively. Thus, the received signals $Y_{in}$'s in the $n$th subcarrier at $T_i$'s are given by
{\setlength\arraycolsep{2pt} \begin{eqnarray}
Y_{Rn}&=&\sqrt{P_{1n}}h_{1n}X_{1n}+\sqrt{P_{2n}}h_{2n}X_{2n}+Z_{Rn}, \label{eq:mac-channel}\\
Y_{1n}&=&\sqrt{P_{Rn}}\tilde{h}_{1n}X_{Rn}+Z_{1n}, \\
Y_{2n}&=&\sqrt{P_{Rn}}\tilde{h}_{2n}X_{Rn}+Z_{2n}, \label{eq:bc-channel}
\end{eqnarray}}
where $P_{in}$ denotes the transmit power in the $n$th subcarrier at $T_i$, and $Z_{in}$ denotes independent complex \emph{additive white Gaussian noises} with zero mean and unit variance, i.e., $Z_{in}\!\sim\!\mathcal{CN}(0,1)$, for $i\!=\!1,2,R$. Therefore, $P_{in}$ essentially denotes the corresponding transmit SNR.

We assume that ${\mu}\!\in\!(0,1)$ denotes the \emph{fixed} proportion of time slot allocated to the MA phase, and all the terminals are subject to separate power constraints $\sum_{n=1}^N\!{P_{in}}\leq P_{i\!\max}$ ($i\!=\!1,2,R$), where $P_{i\!\max}$ denotes the maximum available power for $T_i$.

\section{A novel DF relay strategy for two-way \\OFDM relay networks} \label{sec:rate-region}
For two-way OFDM relay networks, the conventional DF relay strategies simply applied the narrow-band DF technique over each subcarrier independently, and the overall throughput was the sum rate of all the subcarriers \cite{Jitvan_TVT09,Li11}. However, these strategies suffer from rate losses due to channel mismatching.

In this section, we propose a novel \emph{multi-subcarrier} DF relay strategy, which performs channel coding across subcarriers to leverage frequency selective fading, and achieve higher data rate. An achievable rate region for this strategy is provided in the following theorem.
\begin{thm} \label{thm1}
Any rate pair $(R_{12},R_{21})$ satisfying the following inequalities is achievable for the two-way OFDM relay network given by \eqref{eq:mac-channel}-\eqref{eq:bc-channel}:
\begin{subequations} \label{eq:thm1}
\begin{align}
R_{12}&\leq{\mu}\sum_{n=1}^N{\log_2\!\left(1+|h_{1n}|^{2}P_{1n}\right)},\\[-3pt]
R_{12}&\leq(1-{\mu})\sum_{n=1}^N{\log_2\!\left(1+|\tilde{h}_{2n}|^{2}P_{Rn}\right)},\\[-3pt]
R_{21}&\leq{\mu}\sum_{n=1}^N{\log_2\!\left(1+|h_{2n}|^{2}P_{2n}\right)},\\[-3pt]
R_{21}&\leq(1-{\mu})\sum_{n=1}^N{\log_2\!\left(1+|\tilde{h}_{1n}|^{2}P_{Rn}\right)},\\[-3pt]
R_{12}+R_{21}&\leq{\mu}\sum_{n=1}^N{\log_2\!\left(1+|h_{1n}|^{2}P_{1n}+|h_{2n}|^{2}P_{2n}\right)},\\[-3pt]
\sum_{n=1}^N{P_{in}}&\leq P_{i\!\max},\;i=1,2,R, \\
P_{in}&\geq0,\;\forall~n\in\Nset,\;i=1,2,R.
\end{align}
\end{subequations}
where $\Nset\triangleq\{1,\ldots,N\}$, $R_{12}$ and $R_{21}$ denote the achievable data rates from $T_1$ to $T_2$ and from $T_2$ to $T_1$, respectively.
\end{thm}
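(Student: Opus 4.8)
\noindent\textit{Proof proposal.} The plan is to prove achievability by a standard random-coding argument, with the one twist that the Gaussian codebooks span \emph{all} $N$ subcarriers jointly rather than being defined per subcarrier. I would analyze the two phases separately and show that the overall achievable region is the intersection of a multiple-access (MA) region, whose mutual-information terms carry the prefactor $\mu$, and a bidirectional-broadcast (BC) region with prefactor $1-\mu$; the inequalities in \eqref{eq:thm1}, together with the per-node power constraints, are precisely this intersection.

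For the MA phase I would view \eqref{eq:mac-channel}, read over $n\in\Nset$ and over the $\mu$-fraction of the transmission block, as a two-user Gaussian multiple-access channel with $N$ parallel independent sub-channels. Terminal $T_1$ maps its message $W_{12}$ (rate $R_{12}$) to a codeword whose $n$th component is i.i.d.\ $\mathcal{CN}(0,P_{1n})$, and $T_2$ does likewise for $W_{21}$ with powers $P_{2n}$; the relay performs joint-typicality decoding of the pair $(W_{12},W_{21})$ from $Y_R$. The usual MAC error analysis, together with the fact that the mutual information of parallel independent channels is the sum of the per-sub-channel mutual informations, gives reliable decoding whenever $R_{12}\le\mu\sum_{n}\log_2(1+|h_{1n}|^2P_{1n})$, $R_{21}\le\mu\sum_{n}\log_2(1+|h_{2n}|^2P_{2n})$, and $R_{12}+R_{21}\le\mu\sum_{n}\log_2(1+|h_{1n}|^2P_{1n}+|h_{2n}|^2P_{2n})$, i.e.\ the first, third, and fifth lines of \eqref{eq:thm1}.

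For the BC phase I would invoke the bidirectional-broadcast idea: conditioned on the relay having correctly recovered $(\hat{X}_1,\hat{X}_2)$, hence $(W_{12},W_{21})$, it forms $X_R=f(\hat{X}_1,\hat{X}_2)$ as a single codeword over the $N$ subcarriers, with $n$th component i.i.d.\ $\mathcal{CN}(0,P_{Rn})$, indexed by a network-coded combination of the two messages (e.g.\ a modulo/XOR of their indices, equivalently random binning with bin index $W_{12}\oplus W_{21}$). Receiver $T_2$, which knows its own message $W_{21}$ as side information, then recovers $W_{12}$ over its frequency-selective link provided $R_{12}\le(1-\mu)\sum_{n}\log_2(1+|\tilde{h}_{2n}|^2P_{Rn})$, and symmetrically $T_1$ recovers $W_{21}$ provided $R_{21}\le(1-\mu)\sum_{n}\log_2(1+|\tilde{h}_{1n}|^2P_{Rn})$; these are the second and fourth lines of \eqref{eq:thm1}. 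A union bound over the MA-phase and BC-phase error events shows the end-to-end error probability vanishes, so every $(R_{12},R_{21})$ satisfying \eqref{eq:thm1} is achievable.

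I expect the main obstacle to be the BC phase: one must exhibit the combining map $f$ carefully enough that \emph{both} receivers, each using only its own message as side information, decode reliably from the single broadcast codeword, and verify the error analysis when $R_{12}\neq R_{21}$ (the combined index then lives in an alphabet of size $2^{n\max(R_{12},R_{21})}$, with appropriate zero-padding of the shorter message). A secondary point worth spelling out is the by-now-standard but essential fact that a single code spanning the $N$ subcarriers attains the \emph{sum} of the per-subcarrier rates rather than being bottlenecked by the weakest subcarrier — this is exactly what lets the proposed multi-subcarrier strategy beat per-subcarrier DF — together with the routine handling of error propagation from the MA phase into the BC phase.
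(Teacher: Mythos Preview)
Your proposal is correct and is essentially the same achievability scheme the paper uses. The paper's proof is simply more terse: rather than spelling out the random-coding and network-coding arguments, it invokes the known two-way-relay DF achievable region of \cite{Knopp06} in mutual-information form (the inequalities $R_{12}\le\min\{\mu I(X_1;Y_R|X_2),(1-\mu)I(X_R;Y_2)\}$, etc.) and then specializes to the parallel Gaussian model \eqref{eq:mac-channel}--\eqref{eq:bc-channel} with independent $\mathcal{CN}(0,1)$ inputs on each subcarrier, so that each mutual information becomes a sum of per-subcarrier log terms as in \cite{Liang_TIT07}; this yields precisely the inequalities you derive directly.
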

\begin{proof}
An achievable rate region of a DF relay strategy for the discrete-memoryless two-way relay network has been given by the set of $(R_{12},R_{21})$ rate pairs satisfying \cite{Knopp06}
\begin{subequations}
\begin{align}
R_{12}&\leq\min\!\left\{{\mu}I(X_1;Y_R|X_2),(1-{\mu})I(X_R;Y_2)\right\}, \label{eq:ineq1}\\
R_{21}&\leq\min\!\left\{{\mu}I(X_2;Y_R|X_1),(1-{\mu})I(X_R;Y_1)\right\}, \label{eq:ineq2}\\
\!\!\!\!R_{12}+R_{21}&\leq{\mu}I(X_1,X_2;Y_R). \label{eq:ineq3}
\end{align}
\end{subequations}
Each mutual information item in \eqref{eq:ineq1}-\eqref{eq:ineq3} corresponds to the achievable rate of a parallel point-to-point channel. Similar to the idea in the proof of \cite[Theorem 1]{Liang_TIT07}, we choose the input signals for each subcarrier to be \emph{independent Gaussian} distributed with unit variance, i.e., $X_{in}\!\sim\!\mathcal{CN}(0,1)$. Thus, each mutual information item in \eqref{eq:ineq1}-\eqref{eq:ineq3} is replaced by the sum of $N$ logarithmic rate items decided by \eqref{eq:mac-channel}-\eqref{eq:bc-channel} with separate power constraints. The proof is complete.
\end{proof}

\emph{Remark 1:} The key idea of this \emph{multi-subcarrier} two-way DF relay strategy is introducing channel coding across subcarriers to fully exploit frequency selective fading. Implicitly, the information transmitted over one subcarrier in the MA phase may be forwarded over some other subcarriers in the BC phase. By this, the problem from \emph{mismatching} of wireless channels over subcarriers is resolved. The achievable rate region of Theorem~\ref{thm1} is no smaller than that achieved by the \emph{per-subcarrier} two-way DF relaying, which is the set of rate pairs satisfying \cite{Jitvan_TVT09}
\begin{subequations}
\begin{align}
R_{12}\leq\sum_{n=1}^N&{\min\!\left\{{\mu}\log_2\!\left(1+|h_{1n}|^{2}P_{1n}\right),\right.}\nonumber\\[-7pt]
&{\big.(1-{\mu})\log_2\!\big(1+|\tilde{h}_{2n}|^{2}P_{Rn}\big)\big\}}, \nonumber\\[-3pt]
R_{21}\leq\sum_{n=1}^N&{\min\!\left\{{\mu}\log_2\!\left(1+|h_{2n}|^{2}P_{2n}\right),\right.}\nonumber\\[-7pt]
&{\big.(1-{\mu})\log_2\!\big(1+|\tilde{h}_{1n}|^{2}P_{Rn}\big)\big\}}, \nonumber\\[-3pt]
R_{12}+R_{21}\leq{\mu}\sum_{n=1}^N&{\log_2\!\left(1+|h_{1n}|^{2}P_{1n}+|h_{2n}|^{2}P_{2n}\right)}.\nonumber
\end{align}
\end{subequations}

Therefore, multi-subcarrier two-way relay channel is not a simple linear combination of multiple narrow-band single-subcarrier two-way relay subchannels. Similar observations have been found for \emph{one-way} parallel relay networks \cite{Liang_TIT07}.

\section{Optimal Power Allocation} \label{sec:power-allocation}
In this section, we investigate the largest achievable symmetric exchange data rate of our proposed DF relay stratey, which can be approached by an optimal power allocation.
\subsection{Problem Formulation}
By optimizing the power allocation strategy $(\bm{P}_1,\bm{P}_2,\bm{P}_R)$, our objective is to maximize the \emph{exchange rate} $R_\mathrm{X}=\min\left\{R_{12},\,R_{21}\right\}$, which is defined as the data rate can be achieved simultaneously in both directions, where $\bm{P}_i=[P_{i1},P_{i2},\ldots,P_{iN}]^T$ denotes the power allocation vector at $T_i$, for $i\!=\!1,2,R$. This can be expressed as the following \emph{convex} optimization problem:
\begin{subequations} \label{eq:original-problem}
\begin{align}
\!\!\!\!\max_{\bm{P}_1,\bm{P}_2,\bm{P}_R,R_\mathrm{X}}&R_\mathrm{X} \\[-6pt]
\mathrm{s.t.}~R_\mathrm{X}&\leq{\mu}\sum_{n=1}^N{\log_2\!\left(1+|h_{1n}|^{2}P_{1n}\right)},\label{eq:MA-constr1}\\[-3pt]
R_\mathrm{X}&\leq(1-{\mu})\sum_{n=1}^N{\log_2\!\left(1+|\tilde{h}_{2n}|^{2}P_{Rn}\right)},\label{eq:BC-constr1}\\[-3pt]
R_\mathrm{X}&\leq{\mu}\sum_{n=1}^N{\log_2\!\left(1+|h_{2n}|^{2}P_{2n}\right)},\label{eq:MA-constr2}\\[-3pt]
R_\mathrm{X}&\leq(1-{\mu})\sum_{n=1}^N{\log_2\!\left(1+|\tilde{h}_{1n}|^{2}P_{Rn}\right)},\label{eq:BC-constr2}\\[-3pt]
R_\mathrm{X}&\leq\frac{\mu}{2}\sum_{n=1}^N{\log_2\!\left(1+|h_{1n}|^{2}P_{1n}+|h_{2n}|^{2}P_{2n}\right)},\label{eq:MA-constr3}\\[-3pt]
\sum_{n=1}^N&{P_{in}}\leq P_{i\!\max},\;i=1,2,R, \\
P_{in}&\geq0,\;\forall~n\in\Nset,\;i=1,2,R.
\end{align}
\end{subequations}

It is readily observed that in the problem \eqref{eq:original-problem}, $\bm{P}_1$ and $\bm{P}_2$ are only related to the constraints \eqref{eq:MA-constr1} \eqref{eq:MA-constr2} \eqref{eq:MA-constr3}, while $\bm{P}_R$ is only related to the constraints \eqref{eq:BC-constr1} \eqref{eq:BC-constr2}. This observation helps to decompose our original power allocation problem \eqref{eq:original-problem} into the following two subproblems:
{\setlength\arraycolsep{2pt} \begin{eqnarray}
\!\!\!\!\max_{\bm{P}_1,\bm{P}_2,R_\mathrm{MA}}&&R_\mathrm{MA} \label{eq:MA-problem}\\[-6pt]
\mathrm{s.t.}&&R_\mathrm{MA}\leq{\mu}\sum_{n=1}^N{\log_2\!\left(1+|h_{1n}|^{2}P_{1n}\right)},\nonumber\\[-3pt]
&&R_\mathrm{MA}\leq{\mu}\sum_{n=1}^N{\log_2\!\left(1+|h_{2n}|^{2}P_{2n}\right)},\nonumber\\[-3pt]
&&R_\mathrm{MA}\leq\frac{\mu}{2}\sum_{n=1}^N{\log_2\!\left(1+|h_{1n}|^{2}P_{1n}+|h_{2n}|^{2}P_{2n}\right)},\nonumber\\[-3pt]
&&\sum_{n=1}^N{P_{1n}}\leq P_{1\!\max},\;\sum_{n=1}^N{P_{2n}}\leq P_{2\!\max},\nonumber\\
&&P_{1n}\geq0,P_{2n}\geq0,\;\forall~n\in\Nset. \nonumber\\[6pt]
\max_{\bm{P}_R,R_\mathrm{BC}}&&R_\mathrm{BC} \label{eq:BC-problem}\\[-6pt]
\mathrm{s.t.}&&R_\mathrm{BC}\leq(1-{\mu})\sum_{n=1}^N{\log_2\!\left(1+|\tilde{h}_{1n}|^{2}P_{Rn}\right)},\nonumber\\[-3pt]
&&R_\mathrm{BC}\leq(1-{\mu})\sum_{n=1}^N{\log_2\!\left(1+|\tilde{h}_{2n}|^{2}P_{Rn}\right)},\nonumber\\[-3pt]
&&\sum_{n=1}^N{P_{Rn}}\leq P_{R\!\max},\;P_{Rn}\geq0,\;\forall~n\in\Nset. \nonumber
\end{eqnarray}}

We can denote $R_\mathrm{MA}^{\star}$ and $R_\mathrm{BC}^{\star}$ as the optimal values for the MA subproblem \eqref{eq:MA-problem} and the BC subproblem \eqref{eq:BC-problem}, respectively. Eventually, the maximal practical exchange rate for our proposed DF strategy is given by $R_\mathrm{X}^{\star}=\min\!\left\{R_\mathrm{MA}^{\star},R_\mathrm{BC}^{\star}\right\}$.

\subsection{Proposed Dual Decomposition Algorithm}
The interior-point methods can be used to solve both of the \emph{convex} optimization problems \eqref{eq:MA-problem} and \eqref{eq:BC-problem}, however, they quickly become computationally intractable as $N$ increases, because they have a complexity of $O(N^3)$ at least when solving the search direction in each iteration \cite{Boyd04}. Therefore, we present a low-complexity dual decomposition algorithm for the subproblems \eqref{eq:MA-problem} and \eqref{eq:BC-problem}, to efficiently obtain the optimal solution to \eqref{eq:original-problem}. Next, we will take the subproblem \eqref{eq:MA-problem} as an example to illustrate this algorithm.

Note that problem \eqref{eq:MA-problem} is strictly feasible. Then, according to the Slater's condition \cite{Boyd04}, it is equivalent with the following dual optimization problem:
\begin{equation} \label{eq:dual-problem}
\max_{\bm{\lambda},\bm{\alpha}\succeq0}\left\{
\min_{\bm{P}_1,\bm{P}_2\succeq0,R_\mathrm{MA}} \mathcal{L}\left(\bm{P}_1,\bm{P}_2,R_\mathrm{MA},\bm{\lambda},\bm{\alpha}\right)\right\},
\end{equation}
where
{\setlength\arraycolsep{0pt} \begin{eqnarray} \label{eq:Lagrangian}
\!\!\!\!&&\mathcal{L}\left(\bm{P}_1,\bm{P}_2,R_\mathrm{MA},\bm{\lambda},\bm{\alpha}\right)=-R_\mathrm{MA}\nonumber\\
&&~~+\,\lambda_1\!\!\left[R_\mathrm{MA}-{\mu}\sum_{n=1}^N{\log_2\!\left(1+|h_{1n}|^{2}P_{1n}\right)}\right]\nonumber\\[-2pt]
&&~~+\,\lambda_2\!\!\left[R_\mathrm{MA}-{\mu}\sum_{n=1}^N{\log_2\!\left(1+|h_{2n}|^{2}P_{2n}\right)}\right]\nonumber\\[-2pt]
&&~~+\,\lambda_3\!\!\left[R_\mathrm{MA}-\frac{\mu}{2}\sum_{n=1}^N{\log_2\!\left(1+|h_{1n}|^{2}P_{1n}+|h_{2n}|^{2}P_{2n}\right)}\right]\nonumber\\[-2pt]
&&~~+\,\alpha_1\!\!\left(\sum_{n=1}^N{P_{1n}}-P_{1\!\max}\right)\!\!+\alpha_2\!\!\left(\sum_{n=1}^N{P_{2n}}-P_{2\!\max}\right)\nonumber\\[-2pt]
&&=\sum_{n=1}^N\bigg[\alpha_1P_{1n}+\alpha_2P_{2n}-\mu\lambda_1\log_2\!\left(1+|h_{1n}|^{2}P_{1n}\right)\bigg.\nonumber\\
&&~~~-\mu\lambda_2\log_2\!\left(1+|h_{2n}|^{2}P_{2n}\right)\nonumber\\
&&~~~-\!\left.\frac{\mu\lambda_3}{2}\log_2\!\left(1+|h_{1n}|^{2}P_{1n}+|h_{2n}|^{2}P_{2n}\right)\right]\nonumber\\
&&~~+\left(\lambda_1+\lambda_2+\lambda_3-1\right)R_\mathrm{MA}-\alpha_1P_{1\!\max}-\alpha_2P_{2\!\max}
\end{eqnarray}}
is the partial Lagrangian of \eqref{eq:MA-problem}, and $\bm{\lambda}=[\lambda_1,\lambda_2,\lambda_3]^T, \bm{\alpha}=[\alpha_1,\alpha_2]^T$ are nonnegative dual variables associated with the three rate constraints and two power constraints, respectively.

According to \eqref{eq:Lagrangian}, the \emph{inner minimization} problem of \eqref{eq:dual-problem} can be decomposed as $N$ independent per-subcarrier power allocation problems. Hence, the computational complexity for solving the inner problem is only linear with respect to $N$. In addition, the optimal $(P_{1n},P_{2n})$ must satisfy the following Karush-Kuhn-Tucker (KKT) conditions for given dual variables $(\bm{\lambda},\bm{\alpha})$ \cite{Boyd04}:
{\setlength\arraycolsep{2pt} \begin{eqnarray}
\frac{\partial{\mathcal{L}}}{\partial{P_{1n}}}&=&
\alpha_1\!-\!\frac{\mu\lambda_3|h_{1n}|^2}{2\ln2(1+|h_{1n}|^2P_{1n}+|h_{2n}|^2P_{2n})}\nonumber\\
&&-\frac{\mu\lambda_1|h_{1n}|^2}{\ln2(1+|h_{1n}|^2P_{1n})}
\left\{\begin{array}{ll}
\geq0&\;\mathrm{if}\;P_{1n}=0 \\[2pt]
=0&\;\mathrm{if}\;P_{1n}>0
\end{array}\right., \label{eq:KKT1}\\[5pt]
\frac{\partial{\mathcal{L}}}{\partial{P_{2n}}}&=&
\alpha_2\!-\!\frac{\mu\lambda_3|h_{2n}|^2}{2\ln2(1+|h_{1n}|^2P_{1n}+|h_{2n}|^2P_{2n})}\nonumber\\
&&-\frac{\mu\lambda_2|h_{2n}|^2}{\ln2(1+|h_{2n}|^2P_{2n})}
\left\{\begin{array}{ll}
\geq0&\;\mathrm{if}\;P_{2n}=0 \\[2pt]
=0&\;\mathrm{if}\;P_{2n}>0
\end{array}\right.. \label{eq:KKT2}
\end{eqnarray}}

Thus, it must belong to one of the following four cases:

\textbf{Case 1:} $P_{1n}>0,P_{2n}>0$. Then the formulas \eqref{eq:KKT1} and \eqref{eq:KKT2} hold with equality. It is hard to solve \eqref{eq:KKT1} and \eqref{eq:KKT2} directly since they are both \emph{quadratic} equations of two variables $P_{1n}$ and $P_{2n}$. However, we can utilize an auxiliary variable defined as $x=|h_{1n}|^2P_{1n}+|h_{2n}|^2P_{2n}$ to simplify them. More specifically, from \eqref{eq:KKT1} and \eqref{eq:KKT2}, one can obtain that
{\setlength\arraycolsep{2pt}\begin{eqnarray}
|h_{1n}|^2P_{1n}&=&\frac{2\mu\lambda_1|h_{1n}|^2}{2\ln2\cdot\alpha_1\!-\!\mu\lambda_3|h_{1n}|^2/(1+x)}-1,\label{eq:power1}\\
|h_{2n}|^2P_{2n}&=&\frac{2\mu\lambda_2|h_{2n}|^2}{2\ln2\cdot\alpha_2\!-\!\mu\lambda_3|h_{2n}|^2/(1+x)}-1.\label{eq:power2}
\end{eqnarray}}

Taking the sum of the above two equations, we obtain a \emph{cubic} equation of $x$, which has closed-form solutions given by \emph{Cardano's Formula} \cite{Dunham90}. After deriving the positive root $x$ of this \emph{cubic} equation, we can easily obtain the optimal $P_{1n}$ and $P_{2n}$ from \eqref{eq:power1} and \eqref{eq:power2}. By this procedure, the quadratic equations \eqref{eq:KKT1} and \eqref{eq:KKT2} are solved analytically by converting to an equivalent \emph{cubic} equation. Finally, we need to check whether $P_{1n}$ and $P_{2n}$ satisfy the conditions $P_{1n}>0,P_{2n}>0$.

\textbf{Case 2:} $P_{1n}>0,P_{2n}=0$. Then the solutions to \eqref{eq:KKT1} and \eqref{eq:KKT2} can be derived as
{\setlength\arraycolsep{2pt}\begin{eqnarray}
P_{1n}&=&\frac{\mu(2\lambda_1+\lambda_3)}{2\ln2\cdot\alpha_1}-\!\frac{1}{|h_{1n}|^2},\\
P_{2n}&=&0.
\end{eqnarray}}
This case happens only if $P_{1n}>0$ and the KKT condition \eqref{eq:KKT2}, $2\ln2{\cdot}\alpha_2\geq2\mu\lambda_2|h_{2n}|^2+\frac{\mu\lambda_3|h_{2n}|^2}{1+|h_{1n}|^2P_{1n}}$, is satisfied.

\textbf{Case 3:} $P_{1n}=0,P_{2n}>0$. Then the KKT conditions can be reformulated as
{\setlength\arraycolsep{2pt}\begin{eqnarray}
P_{1n}&=&0,\\
P_{2n}&=&\frac{\mu(2\lambda_2+\lambda_3)}{2\ln2\cdot\alpha_2}-\!\frac{1}{|h_{2n}|^2}.
\end{eqnarray}}
This case happens only if $P_{2n}>0$ and the KKT condition \eqref{eq:KKT1}, $2\ln2{\cdot}\alpha_1\geq2\mu\lambda_1|h_{1n}|^2+\frac{\mu\lambda_3|h_{1n}|^2}{1+|h_{2n}|^2P_{2n}}$, is satisfied.

\textbf{Case 4:} $P_{1n}=0,P_{2n}=0$. This is the default case when the above three cases do not happen.

Then, we optimize the dual variables $(\bm{\lambda},\bm{\alpha})$ for the \emph{outer maximization} problem of \eqref{eq:dual-problem}. We redefine $\bm{\nu}=[\lambda_1,\lambda_2,\lambda_3,\alpha_1,\alpha_2]^T$. Further, considering the KKT condition for the optimal data rate $R_\mathrm{MA}$, we have
\begin{equation}
\frac{\partial{\mathcal{L}}}{\partial{R_\mathrm{MA}}}=\lambda_1+\lambda_2+\lambda_3-1=0.\label{eq:KKT3}
\end{equation}

In view of that the objective function is not differentiable with respect to $(\bm{\lambda},\bm{\alpha})$, we consider to update $\bm{\nu}$ using the subgradient method \cite{Bertsekas99,Boyd07}. Specifically, in the $k$th iteration, the subgradient method updates $\bm{\nu}^k$ by
\begin{equation} \label{eq:dual-update}
\bm{\nu}^{k+1}=\left[\bm{\nu}^{k}+s^k\bm{\eta}(\bm{\nu}^k)\right]_{\mathcal{P}},
\end{equation}
where $[\bm{\nu}]_{\mathcal{P}}$ represents the \emph{orthogonal projection} of $\bm{\nu}$ to the dual feasible set $\{\bm{\nu}\mid\bm{1}^T\bm{\lambda}=1, \bm{\lambda},\bm{\alpha}\succeq0\}$ based on a finite algorithm in \cite{Michelot86}, $s^k$ is the step size of the $k$th iteration, and $\bm{\eta}(\bm{\nu}^k)$ is the subgradient of the outer problem of \eqref{eq:dual-problem} at $\bm{\nu}^k$, which can be chosen as
{\setlength\arraycolsep{0.5pt}\begin{eqnarray} \label{eq:subgradient}
\!\!\!\!\!\!\!\!\bm{\eta}(\bm{\nu}^k)&=&\!\left[\begin{array}{c}
-{\mu}\sum_{n=1}^N{\log_2\!\left(1+|h_{1n}|^{2}P_{1n}^{\star}\right)} \\[5pt]
-{\mu}\sum_{n=1}^N{\log_2\!\left(1+|h_{2n}|^{2}P_{2n}^{\star}\right)} \\[5pt]
-\frac{\mu}{2}\sum_{n=1}^N{\log_2\!\left(1+|h_{1n}|^{2}P_{1n}^{\star}+|h_{2n}|^{2}P_{2n}^{\star}\right)}\\[5pt]
\sum_{n=1}^N{P_{1n}^{\star}}-P_{1\!\max} \\[5pt]
\sum_{n=1}^N{P_{2n}^{\star}}-P_{2\!\max}
\end{array}\right]\!\!,
\end{eqnarray}}
where $P_{1n}^{\star}$ and $P_{2n}^{\star}$ are the optimal solution of the inner minimization problem in the $k$th iteration. It has been shown that the subgradient updates in \eqref{eq:dual-update} can converge to the optimal dual point $\bm{\nu}^{\star}$ as $k\rightarrow\infty$, provided that the step size $s^k$ is chosen according to a diminishing step size rule \cite{Boyd07}.

Let $C_{\mathrm{MA},i}(\bm{P}_1,\bm{P}_2)(i=1,2,3)$ denote the the right-hand sides of three rate constraints in \eqref{eq:MA-problem}, respectively, and thus we obtain the optimal $R_\mathrm{MA}^{\star}=\min\{C_{\mathrm{MA},i}(\bm{P}_1^{\star},\bm{P}_2^{\star}),\;i\!=\!1,2,3\}$.

The proposed dual decomposition algorithms for the MA subproblem \eqref{eq:MA-problem} are summarized in Algorithm~\ref{alg1}. Similarly, the BC subproblem \eqref{eq:BC-problem} can be solved with the same techniques. Their complexity grow in the order of $O(N)$, which are much lower than the classic convex optimization software package based on interior-point methods. Therefore, our proposed algorithm is more favorable for large value of $N$, which is quite typical in OFDM systems.
\begin{algorithm}
\caption{Proposed dual decomposition algorithm for \eqref{eq:MA-problem}} \label{alg1}
\begin{algorithmic}[1]  
\State \textbf{Input} the system parameters $\!\{N,P_{1\!\max},P_{2\!\max}\}\!$, the channel coefficients $\{h_{1n},h_{2n}\}_{n=1}^N$, and a solution accuracy $\epsilon$.
\State Set $k=1$; Initialize dual variables $\bm{\nu}^1=\bm{1}$.
\Repeat
\State Compute the optimal $\{P_{1n},P_{2n}\}$ according to \eqref{eq:KKT1} and \eqref{eq:KKT2} for $\forall~n\in\Nset$;
\State Update the dual variables $\bm{\nu}^{k}$ according to \eqref{eq:dual-update};
\State $k:=k+1$;
\Until $\|\bm{\nu}^{k}-\bm{\nu}^{k-1}\|\leq\epsilon\,\|\bm{\nu}^{k-1}\|$.
\State \textbf{Output} the optimal primal solution $\{\bm{P}_1^{\star},\bm{P}_2^{\star}\}$ and $R_\mathrm{MA}^{\star}\!=\!\min\{C_{\mathrm{MA},i}(\bm{P}_1^{\star},\bm{P}_2^{\star}),\;i\!=\!1,2,3\}$.
\end{algorithmic}
\end{algorithm}

\section{Simulation Results} \label{sec:simulation}
We consider an OFDM system with $N=32$ subcarriers. The frequency-domain channels are generated using 8 independently and identically distributed Rayleigh distributed time-domain taps with unit variance \cite{Ho_ICC08}. The separate power constraints are set as $P_{1\!\max}=P_{2\!\max}=P_{R\!\max}$, and $\mu=0.5$.

Our proposed multi-subcarrier DF relay strategy is denoted as ``Type~1 DF'' scheme. Two reference schemes are considered in our simulations: The first one is the \emph{per-subcarrier} two-way DF OFDM relay strategy in \cite{Jitvan_TVT09}, which is denoted as ``Type~2 DF'' scheme; the second one is the two-way AF OFDM relaying scheme with optimized tone permutation in \cite{Ho_ICC08}. We divide the sum rate (approximated by the lower bound $2R_\mathrm{X}$ in Type~1/2 DF scheme) by $N$ and use this per-subcarrier sum rate to evaluate performance at different average SNRs, which are only related with the power constraints $P_{i\!\max}$'s.

Fig.~\ref{fig:result} presents the performance of different two-way OFDM relay strategies. The best performance is achieved by Type~1 DF scheme with optimal power allocation (PA). At the spectral efficiency of 2~bits/s/Hz, Type~1 DF scheme with optimal PA provides a coding gain of about 2.5~dB compared with Type~2 DF scheme, by performing channel coding across subcarriers. The PA gain between optimal PA and uniform PA of Type~1 DF scheme is given by 1.6~dB. It is interesting that Type~1 DF scheme with uniform PA even outperforms Type~2 DF scheme with optimal PA, when the average SNR is in the region [0~dB, 20~dB].

Although Type~1 DF scheme has no advantage over the AF scheme in the high SNR region due to its additional fully decoding requirement at $T_R$, it outperforms the AF scheme in the low and median SNR region. The intersection of the curves for Type~1 DF scheme and the AF scheme is at about 17.5~dB, which is 5~dB higher than that for Type~2 DF scheme and the AF scheme.

\section{Conclusion} \label{sec:conclusion}
We have proposed a novel DF relay strategy for two-way OFDM relay networks and derived its achievable rate region. The key idea is making use of cross-subcarrier channel coding to fully exploit frequency selective fading. An efficient duality-based power allocation algorithm is also proposed to maximize the symmetric exchange data rate in both directions. Our simulation results suggest that the proposed DF strategy has better performance than existing DF or AF two-way OFDM relay strategies in the moderately low SNR region. We believe this two-way DF strategy tends to be optimal, i.e., achieving the capacity region outer bound, in the moderately low SNR region. The optimality of the proposed two-way DF strategy and the effect of channel uncertainty are currently under our investigation.

\begin{figure}[t]
    \centering
    \scalebox{0.55}{\includegraphics*[70,422][485,746]{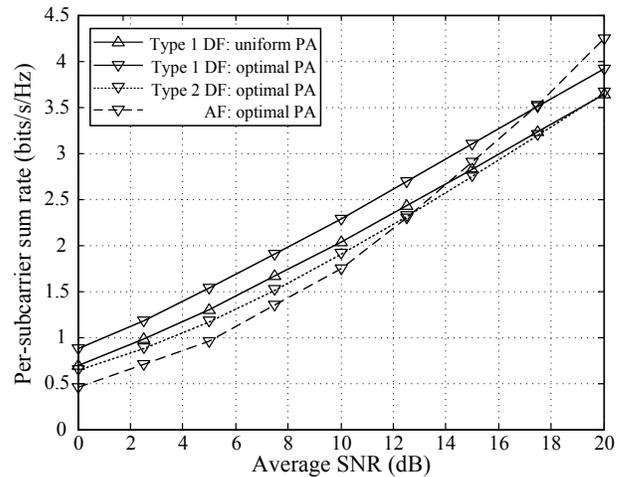}}
    \caption{Per-subcarrier sum rate of different two-way OFDM relaying schemes.} \label{fig:result}
\end{figure}


\bibliographystyle{IEEEtran}
\bibliography{refs_ICC12}

\end{document}